\theoremstyle{definition}
\newtheorem{theorem}{Theorem}[section]
\newtheorem{remark}[theorem]{Remark}
\def\@seccntformat#1{\@ifundefined{#1@cntformat}%
	{\csname the#1\endcsname\quad}
	{\csname #1@cntformat\endcsname}
}
\newif\ifShowComments
\def\strutdepth{\dp\strutbox}
\def\druk#1{\strut\vadjust{\kern-\strutdepth
        {\vtop to \strutdepth{%
                \baselineskip\strutdepth\vss
                        \llap{\hbox{#1}\quad}\null}}}}
\title{\bf
Novel closed-form point estimators for a weighted exponential family derived from likelihood equations
}
\author[1]{  Roberto Vila \thanks{rovig161@gmail.com}}
\author[1]{Eduardo Nakano  \thanks{eynakano@gmail.com} }
\author[1]{Helton Saulo  \thanks{heltonsaulo@gmail.com} }
\affil[1]{Department of Statistics, University of
	Bras\'ilia, 70910-900, Bras\'ilia, Brazil}
\begin{document}
	\maketitle 	
	\begin{abstract}
		{
In this paper, we propose and investigate closed-form point estimators for a weighted exponential family. We also develop a bias-reduced version of these proposed closed-form estimators through bootstrap methods. Estimators are assessed using a Monte Carlo simulation, revealing favorable results for the proposed bootstrap bias-reduced estimators.
}

	\end{abstract}
	\smallskip
	\noindent
	{\small {\bfseries Keywords.} {Weighted exponential family $\cdot$ Maximum likelihood method $\cdot$ Monte Carlo simulation $\cdot$ \verb+R+ software.}}
	\\
	{\small{\bfseries Mathematics Subject Classification (2010).} {MSC 60E05 $\cdot$ MSC 62Exx $\cdot$ MSC 62Fxx.}}
	
	
	\section{Introduction}
	\noindent


%
The aim of this work is to provide closed-form estimators for the parameters of a parametric set of probability distributions that belongs to the weighted exponential family of the following form
%
%
%
\begin{align}\label{pdf-1}
f(x;\psi)
=
{(\mu\sigma)^{\mu+1} \over (\sigma+\delta_{ab})\Gamma(\mu+1)}\,
[1+\delta_{a,b}T(x)]\,
{\vert T'(x)\vert\over T(x)}\,
\exp\left\{-\mu \sigma T(x)+\mu\log(T(x))\right\},
\quad 
x\in D\subseteqq \mathbb{R},
\end{align}
where $\psi=(\mu,\sigma)$, $\mu,\sigma>0$, $T:D\to (0,\infty)$ is a
real strictly monotone twice differentiable function and $\delta_{ab}$ is the Kronecker delta function, that is, $\delta_{ab}$ is 1 if $a=b$, and 0 otherwise.
In the above, $T'(x)$ denotes the derivative of $T(x)$ with respect to $x$.

The probability function $f(x;\psi)$ in \eqref{pdf-1} can be interpreted as a mixture of two distributions that belongs to the exponential family, that is,
\begin{align}\label{decomp-1}
	f(x;\psi)&=
	{\sigma\over \sigma+\delta_{ab}}\,
	f_1(x)
	+
	{\delta_{ab}\over \sigma+\delta_{ab}}\,
	f_2(x),
	\quad x\in D\subseteqq \mathbb{R}, \ \mu, \sigma>0,
	\\[0,3cm]
	&=
		\begin{cases} 
		\displaystyle
		%
			{\sigma\over \sigma+1}\,
		f_1(x)
		+
		{1\over \sigma+1}\,
		f_2(x),
		& a=b,
		\\[0,55cm]
		\displaystyle
		f_1(x),
		& a\neq b,
	\end{cases} \nonumber
\end{align}
where
\begin{align}\label{def-fi}
	f_j(x)
	=
	{(\mu\sigma)^{\mu+j-1} \over \Gamma(\mu+j-1)}\,
	{\vert T'(x)\vert\over T(x)}\,
	\exp\left\{-\mu \sigma T(x)+(\mu+j-1)\log(T(x))\right\},
	\quad 
	x\in D, \quad j=1,2.
\end{align}
Probability densities of form $f_j(x)$, $j=1,2$, have appeared in \cite{Nascimento2014} and more recently in \cite{Vila2024}.

If $X$ has density in \eqref{pdf-1}, from \eqref{decomp-1} and \eqref{def-fi} it is simple to observe that
\begin{align}\label{rep-st-1}
	X\stackrel{\mathscr{D}}{=}(1-B)T^{-1}(Z_1)+BT^{-1}(Z_2),
\end{align}
where ``$\stackrel{\mathscr{D}}{=}$'' means equality in distribution, $B\in\{0,1\}$ is a Bernoulli random variable with success parameter $\delta_{ab}/(\sigma+\delta_{ab})$, independent of $Z_1$ and $Z_2$, such that $Z_j\sim{\rm Gamma}(\mu+j-1,1/(\mu\sigma))$, $j=1,2$. Here, $T^{-1}$ denotes the inverse function of $T$.

Since $T$ is arbitrary and
real strictly monotone twice differentiable function, our model covers a wide range of distributions including the models studied in \cite{Kim2022}. The model \eqref{pdf-1} is reduced to those distributions mentioned in \cite{Kim2022} by relating the transformation $T$ with the extension of the Box-Cox transformation as follows: $T^{-1}(x)=\log(x^{1/\gamma})$, $x>0$, and $a\neq b$.



Table \ref{table:1-0} presents some examples of generators $T(x)$ for use in \eqref{pdf-1} with $a=b$.

\begin{table}[H]
	\caption{Some examples of 
		generators $T(x)$ of probability densities \eqref{pdf-1} with $D=(0,\infty)$ and $a= b$.}
	\vspace*{0.15cm}
	\centering 
		\resizebox{\linewidth}{!}{
			\begin{tabular}{lcccccccl} 
				\hline
				Distribution & $\mu$ & $\sigma$ & $T(x)$&$T'(x)$&$T''(x)$ & Parameters
				\\ [0.5ex] 
				\noalign{\hrule height 1.0pt}
				\\ [-1.5ex] 
				Weighted Lindley \citep{Kim2021}
				&  $\phi$  & ${\lambda\over \phi}$   & $x$&1& 0&  $\lambda,\phi>0$		
				\\ [2.0ex] 
Weighted inverse  Lindley  &  $\phi$  & ${\lambda\over \phi}$   & ${1\over x}$&$-{1\over x^2}$& ${2\over x^3}$&  $\lambda,\phi>0$
				\\ [2.0ex] 
New weighted exponentiated Lindley  &  $\phi$  & ${\lambda\over \phi}$   & ${\log(x+1)}$&${1\over x+1}$& $-{1\over (x+1)^2}$&  $\lambda,\phi>0$
				\\ [2.0ex] 
New weighted log-Lindley  &  $\phi$  & ${\lambda\over \phi}$   & $\exp(x)-1$&$\exp(x)$& $\exp(x)$&  $\lambda,\phi>0$
		\\ [2.0ex]  
		Weighted Nakagami 
		&  $m$  & ${1\over \Omega}$   & $x^2$&$2x$& $2$&  $m\geqslant {1\over 2}, \ \Omega>0$		
		\\ [2.0ex] 
		Weighted inverse  Nakagami  &   $m$  & ${1\over \Omega}$    & ${1\over x^2}$&$-{2\over x^3}$&${6\over x^4}$&  $m\geqslant {1\over 2}, \ \Omega>0$
		\\ [2.0ex] 
		New weighted exponentiated Nakagami  &   $m$  & ${1\over \Omega}$   & ${\log(x^2+1)}$&${2x\over x^2+1}$& 
		$-{2 ( x^2-1)\over(x^2+1)^2}$&  $m\geqslant {1\over 2}, \ \Omega>0$
		\\ [2.0ex] 
		New weighted log-Nakagami  &   $m$  & ${1\over \Omega}$  & $\exp(x^2)-1$&$2x\exp(x^2)$&$2\exp(x^2) (1 + 2 x^2)$ &  $m\geqslant {1\over 2}, \ \Omega>0$
		\\ [1.0ex] 
		\hline	
\end{tabular}
}
\label{table:1-0} 
\end{table}

For  $a\neq b$,
Table 1 of \cite{Vila2024} provides some examples of generators $T(x)$, with $D=(0,\infty)$, for use in \eqref{pdf-1}.

\bigskip 
There are various methodological proposals in the literature for obtaining close-form estimators. The methodologies for obtaining the close-form estimators can be moment-based-type \citep{Cheng-Beaulieu2002} or based on score-adjusted approach \citep{Nawa2023, Tamae2020}, for example. There are also proposals to obtain close-form estimators based on the likelihood function. One type of likelihood-based estimator is obtained by considering the likelihood equations of the generalized distribution obtained through a power transformation and taking the baseline distribution as a special case. Several works adopt this idea, sometimes with slight variations in the power transformation. See for example \cite{YCh2016}, \cite{Vila2024}, \cite{RLR2016}, \cite{Kim2021} and \cite{Zhao2021}. In addition, other proposals for likelihood-based estimators can also be seen in \cite{Kim2022}, that adopt an extension of the Box-Cox transformation for obtaining a closed-form estimator and \cite{Cheng-Beaulieu2001}, who proposed a closed-form estimator by an asymptotic expansion.

This work develops a likelihood-based closed-form estimator for the parameters of probability distributions of the weighted exponential family \eqref{pdf-1} using the generalizing distribution procedure through a power transformation.

The rest of the paper proceeds as follows. In Section \ref{A generalization of weighted exponential family}, we state briefly a generalization of weighted exponential
family. In Section \ref{The New Estimators} and \ref{Asymptotic behavior of estimators} we describe the new proposed estimation methods and some asymptotic results. Some closed-form estimators
of the parameters of distributions belonging to the weighted exponential class are presented in Section \ref{Sec:5}. Furthermore, in Section \ref{Sec:6}, we perform a Monte Carlo simulation study to assess the
performance of a bootstrap bias-reduced version of these proposed closed-form estimators.



\section{A generalization of weighted exponential family}\label{A generalization of weighted exponential family}
By using the increasing power transformation $Y=X^{1/p}$, $p>0$, considered in \cite{YCh2016} and \cite{Cheng-Beaulieu2002}, where $X$ has the PDF in \eqref{pdf-1}, it is clear that the PDF of $Y$ is written as
\begin{align}\label{dist-gen-exp}
f(y;\psi,p)
=
p\, 
{(\mu\sigma)^{\mu+1} \over (\sigma+\delta_{ab})\Gamma(\mu+1)}\,
[1+\delta_{ab}T(y^p)]\,
{\vert T'(y^p)\vert y^{p-1}\over T(y^p)}\,
\exp\left\{-\mu \sigma T(y^p)+\mu\log(T(y^p))\right\},
\end{align}
where $y\in D\subseteqq \mathbb{R}$, $\psi=(\mu,\sigma)$ and $\mu,\sigma, p>0$.

From \eqref{rep-st-1} it follows that
\begin{align}\label{rep-st-2}
Y\stackrel{\mathscr{D}}{=}(1-B)[T^{-1}(Z_1)]^{1/p}+B[T^{-1}(Z_2)]^{1/p},
\end{align}
where $B\sim {\rm Bernoulli}(\delta_{ab}/(\sigma+\delta_{ab}))$  is independent of $Z_1$ and $Z_2$, such that $Z_j\sim{\rm Gamma}(\mu+j-1,1/(\mu\sigma))$, $j=1,2$.

\section{The new estimators}\label{The New Estimators}

Let $\{Y_i : i = 1,\ldots , n\}$ be a univariate random sample of size $n$ from  $Y$.
%
The (random) likelihood function for $(\psi,p)$ is written as
\begin{align*}
L(\psi,p)    
&=
p^n\, {(\mu\sigma)^{n(\mu+1)}  \over (\sigma+\delta_{ab})^n\Gamma^n(\mu+1)}\,
\prod_{i=1}^{n}
[1+\delta_{ab}T(Y_i^p)]
\prod_{i=1}^{n}
{\vert T'(Y_i^p)\vert Y_i^{p-1}\over T(Y_i^p)}
\\[0,2cm]
&
\times
\exp\left\{-\mu \sigma \sum_{i=1}^{n} T(Y^p_i)+\mu\sum_{i=1}^{n}\log(T(Y^p_i))\right\}.  
\end{align*}
Consequently, the (random) log-likelihood function for $(\psi,p)$ is written as
\begin{align*}
\log(L(\psi,p))    
&=
n\log(p)
+
{n\mu}\log(\mu) 
+ 
{{n(\mu+1)} \log(\sigma)
	-
	n\log(\sigma+\delta_{ab})
%
-n\log(\Gamma(\mu))}
\\[0,2cm]
&
+
\sum_{i=1}^{n}
\log[1+\delta_{ab}T(Y_i^p)]
+
\sum_{i=1}^{n}
\log(\vert T'(Y_i^p)\vert)
+
(p-1)
\sum_{i=1}^{n}
\log(Y_i)
\\[0,2cm]
&-
\mu \sigma \sum_{i=1}^{n} T(Y^p_i)
+
(\mu-1)\sum_{i=1}^{n}\log(T(Y^p_i)).  
\end{align*}

A simple calculus shows that the elements of the (random) score vector are given by
\begin{align*}
 {\partial \log(L(\psi,p)) \over\partial \mu}
 &=
 n+n\log(\mu)+n\log(\sigma)-n\psi^{(0)}(\mu)
-
\sigma \sum_{i=1}^{n} T(Y^p_i)
+
\sum_{i=1}^{n}\log(T(Y^p_i)),  
 \\[0,2cm]
  {\partial \log(L(\psi,p)) \over\partial \sigma}
 &=
  {n(\mu+1) \over \sigma} 
  -
  {n\over \sigma+\delta_{ab}}
  -
  \mu \sum_{i=1}^{n} T(Y^p_i),
 \\[0,2cm]
  {\partial  \log(L(\psi,p)) \over\partial p}
 &=
 {n\over p}
 +
 \delta_{ab}
  \sum_{i=1}^{n}
  {T'(Y_i^p)Y_i^p\log(Y_i)\over 1+\delta_{ab}T(Y_i^p)}
 +
 \sum_{i=1}^{n}
{T''(Y^p_i)\over T'(Y_i^p)}\, Y^p_i \log(Y_i)
+
\sum_{i=1}^{n}
\log(Y_i)
\\[0,2cm]
&
-
\mu \sigma \sum_{i=1}^{n} T'(Y^p_i) Y^p_i \log(Y_i)
+
(\mu-1)\sum_{i=1}^{n} {T'(Y^p_i)\over T(Y^p_i)}\,  Y^p_i \log(Y_i), 
\end{align*}
where $\psi^{(m)}(x)=\partial^{m+1} \log(\Gamma(x))/\partial x^{m+1}$ is the polygamma function of order $m$. Setting these equal to zero and solving the system of equations gives the maximum likelihood (ML) estimators of $(\psi, p)$.

\subsection{Case $a=b$}\label{Case equal}
By resolving  ${\partial \log(L(\psi,p))/\partial p}=0$, we can express $\mu$ as a function of $\sigma$ and $p$ as follows:
%
\begin{align}\label{estimator-mu}
	{\mu}(\sigma,p)
=
%
{\color{black}
	\dfrac{\overline{Z}(p)}{\sigma\overline{Y}_5(p)-\overline{Y}_4(p)},
}
\end{align}
where we adopt the following notations:
\begin{align*}
&\overline{Z}(p) 
\equiv
{1\over p}
+
\overline{Y}_1(p)
+
\overline{Y}_2(p)
+
\overline{Y}_3(p)
-
\overline{Y}_4(p),
\\[0,2cm]
&\overline{Y}_1(p)
\equiv
{1\over n}
\sum_{i=1}^{n}
\log(Y_i),
\\[0,2cm]
&\overline{Y}_2(p)
\equiv
{1\over n}
\sum_{i=1}^{n}
{T''(Y^p_i)\over T'(Y_i^p)}\, Y^p_i \log(Y_i),
\\[0,2cm]
&\overline{Y}_3(p)
\equiv
{1\over n}
\sum_{i=1}^{n}
{T'(Y_i^p)\over 1+T(Y_i^p)}\, Y_i^p\log(Y_i),
\\[0,2cm]
&\overline{Y}_4(p)
\equiv
{1\over n}\sum_{i=1}^{n} {T'(Y^p_i)\over T(Y^p_i)}\,  Y^p_i \log(Y_i),
\\[0,2cm]
&\overline{Y}_5(p)
\equiv
{1\over n} \sum_{i=1}^{n} T'(Y^p_i) Y^p_i \log(Y_i).
\end{align*}

By substituting \eqref{estimator-mu} into ${\partial \log(L(\psi,p))/\partial \sigma}=0$, we
obtain the closed-form estimator for $\sigma$ as follows:
\begin{align}\label{sigma-estimator}
	\widehat{\sigma}(p)
	=
	{\color{black}
	\dfrac
	{\displaystyle 
		\overline{Z}(p)[1-\overline{Y}_6(p)]+\overline{Y}_5(p)+\sqrt{\{\overline{Z}(p)[1-\overline{Y}_6(p)]+\overline{Y}_5(p)\}^2+4\overline{Z}(p)\overline{Y}_6(p)[\overline{Z}(p)-\overline{Y}_4(p)]}}
	{2 \overline{Z}(p)\overline{Y}_6(p)}
},
\end{align}
where
\begin{align*}
\overline{Y}_6(p)
\equiv
{1\over n} \sum_{i=1}^{n} T(Y^p_i).
\end{align*}
By plugging \eqref{sigma-estimator} into \eqref{estimator-mu}, the closed-form estimator for $\mu$ is given by
\begin{align}\label{estimator-mu-1}
\widehat{\mu}(p)
=
%
{\color{black}
	\dfrac{\overline{Z}(p)}{\widehat{\sigma}(p)\overline{Y}_5(p)-\overline{Y}_4(p)},
}
\end{align}

%
Furthermore, notice that 
${p}$ is obtained solving the non-linear equation
\begin{align*}
	\log(\widehat{\mu}(p))-\psi^{(0)}(\widehat{\mu}(p))
	=
	\sigma \overline{Y}_6(p)-1
	+	\log\left({1\over\widehat{\sigma}(p)}\right)
	-	
	{1\over n} \sum_{i=1}^{n}\log(T(Y^p_i)).
\end{align*}

Now, return to the distribution in \eqref{pdf-1}. We take $p=1$ in \eqref{sigma-estimator} and \eqref{estimator-mu-1} to obtain the new estimators for $\sigma$ and $\mu$ as
%
\begin{align}\label{mle-sigma-1}
\widehat{\sigma}
=
{\color{black}
\dfrac
{\displaystyle 
	\overline{Z}(1-\overline{Y}_6)+\overline{Y}_5+\sqrt{[\overline{Z}(1-\overline{Y}_6)+\overline{Y}_5]^2+4\overline{Z}\,\overline{Y}_6(\overline{Z}-\overline{Y}_4)}}
{2 \overline{Z}\, \overline{Y}_6}
}
\end{align}
and
\begin{align}\label{estimator-mu-2}
\widehat{\mu}
=
%
{\color{black}
	\dfrac{\overline{Z}}{\widehat{\sigma}\overline{Y}_5-\overline{Y}_4},
}
\end{align}
respectively,
with $\overline{Z}\equiv \overline{Z}(1)$, $\overline{Y}_4\equiv \overline{Y}_4(1)$, $\overline{Y}_5\equiv \overline{Y}_5(1)$ and $\overline{Y}_6\equiv \overline{Y}_6(1)$.

Note that $\widehat{\mu}$ and $\widehat{\sigma}$ in \eqref{estimator-mu-2} and \eqref{mle-sigma-1} are not the ML estimators for $\mu$ and $\sigma$, respectively.

\subsection{Case $a\neq b$}

In this case, as done in \cite{Vila2024}, we use the equation $[{\partial \log(L(\psi,p))/\partial \sigma}]\vert_{p=1}=0$ to obtain the closed-form estimator for $\sigma$:
	\begin{align*}
	\widehat{\sigma}
	=
	{1\over \overline{Y}_6}.
	\end{align*}
	Note that $\widehat{\sigma}$ is the ML estimator for $\sigma$.
	By replacing $\widehat{\sigma}$ into $[{\partial \log(L(\psi,p))/\partial p}]\vert_{p=1}=0$ we have the following closed-form estimator for $\mu$:
\begin{align*}
\widehat{\mu}
=
\dfrac{\displaystyle
	\overline{Y}_6
\big(
1
	+
	\overline{Y}_1
	+
	\overline{Y}_2
	-
	\overline{Y}_4
\big)
}{
	\overline{Y}_5-\overline{Y}_6\overline{Y}_4},
\end{align*}
where  $\overline{Y}_1$,  $\overline{Y}_2$, $\overline{Y}_4$, $\overline{Y}_5$ and $\overline{Y}_6$ were defined in Subsection \ref{Case equal}.

\begin{remark}
Since case $a\neq b$ was studied extensively in \cite{Vila2024}, from now on we will focus our study on case $a=b$.
\end{remark}

\section{Asymptotic behavior of estimators}\label{Asymptotic behavior of estimators}

Let $Y_1,\ldots, Y_n$ be a random sample of size $n$ from the  variable $Y$ with PDF \eqref{pdf-1}. If we further let  $\overline{\boldsymbol{Y}}=(\overline{Y}_1,\overline{Y}_2,\overline{Y}_3,\overline{Y}_4,\overline{Y}_5,\overline{Y}_6)^\top$ and $\boldsymbol{Y}=(Y_1^*,Y_2^*,Y_3^*,Y_4^*,Y_5^*,Y_6^*)^\top$, with $\overline{Y}_i$, $i=1,\ldots,6$, as given in Subsection \ref{Case equal} and 
\begin{align}\label{Ystar}
Y_1^*=\log(Y), \ \
Y_2^*={T''(Y)\over T'(Y)}\, Y Y_1^*, \ \
Y_3^*=	{T'(Y)\over 1+T(Y)}\, YY_1^*, \ \
Y_4^*={T'(Y)\over T(Y)}\,  Y Y_1^*,  \ \
Y_5^*=T'(Y) Y Y_1^*,  \ \ 
Y_6^*=T(Y),
\end{align}
then, by applying strong law of large
numbers, we have
\begin{align*}
	\overline{\boldsymbol{Y}}\stackrel{\rm a.s.}{\longrightarrow}
	\mathbb{E}(\boldsymbol{Y}),
\end{align*}
where ``$\stackrel{\rm a.s.}{\longrightarrow}$'' denotes almost sure convergence.
Hence, continuous-mapping theorem gives,
\begin{align}\label{id-1}
\widehat{\sigma}=g_1(\overline{\boldsymbol{Y}})\stackrel{\rm a.s.}{\longrightarrow}
g_1(\mathbb{E}(\boldsymbol{Y}))
\quad \text{and} \quad
\widehat{\mu}=g_2(\overline{\boldsymbol{Y}})\stackrel{\rm a.s.}{\longrightarrow}
g_2(\mathbb{E}(\boldsymbol{Y})),
\end{align}
with
\begin{multline}\label{def-g1}
g_1(y_1,y_2,y_3,y_4,y_5,y_6)
=
	\dfrac
	{\displaystyle 
		\left(1+\sum_{i=1}^{3}y_i-y_4\right)(1-y_6)
		+y_5
	}
	{2 \displaystyle\left(1+\sum_{i=1}^{3}y_i-y_4\right) y_6}
	\\[0,2cm]
	+
	\dfrac{		\sqrt{\left[\displaystyle\left(1+\sum_{i=1}^{3}y_i-y_4\right)(1-y_6)+y_5\right]^2+4\displaystyle\left(1+\sum_{i=1}^{3}y_i-y_4\right) y_6\left[\displaystyle\left(1+\sum_{i=1}^{3}y_i-y_4\right)-y_4\right]}}{2 \displaystyle\left(1+\sum_{i=1}^{3}y_i-y_4\right) y_6}
\end{multline}
%
and
%
\begin{align}\label{def-g2}
	g_2(y_1,y_2,y_3,y_4,y_5,y_6)
\equiv
		\dfrac{\displaystyle 
		1+\sum_{i=1}^{3}y_i-y_4}
		{g_1(y_1,y_2,y_3,y_4,y_5,y_6)y_5-y_4}.
\end{align}

Furthermore, by Central limit theorem,
\begin{align*}
	\sqrt{n}\big[\overline{\boldsymbol{Y}}-\mathbb{E}(\boldsymbol{Y})\big]\stackrel{\mathscr D}{\longrightarrow} N_6(\bm 0, \bm\Sigma),
\end{align*}
where $\bm\Sigma$ denotes the covariance matrix of $\boldsymbol{Y}$ and ``$\stackrel{\mathscr D}{\longrightarrow}$'' means convergence in distribution.
So, delta method provides
\begin{align}\label{id-2}
	\sqrt{n}
	\left[
	\begin{pmatrix}
	\widehat{\mu}
	\\[0,2cm]
	\widehat{\sigma}
	\end{pmatrix}
	-
	\begin{pmatrix}
	g_2(\mathbb{E}(\bm Y))
	\\[0,2cm]
	g_1(\mathbb{E}(\bm Y))
	\end{pmatrix}
	\right]
	\stackrel{\mathscr D}{\longrightarrow}
	N_2(\bm 0, \bm A\bm\Sigma \bm A^\top),
\end{align}
with $\bm A$ being the partial derivatives matrix defined as
\begin{align*}
	\bm A
	=
	\left. 
	\begin{pmatrix}
	\displaystyle
	{\partial g_2(\bm y)\over\partial y_1} & \displaystyle {\partial g_2(\bm y)\over\partial y_2} & \displaystyle {\partial g_2(\bm y)\over\partial y_3} & \displaystyle {\partial g_2(\bm y)\over\partial y_4} & \displaystyle {\partial g_2(\bm y)\over\partial y_5} & \displaystyle {\partial g_2(\bm y)\over\partial y_6}
	\\[0,5cm]
	\displaystyle
	{\partial g_1(\bm y)\over\partial y_1} & \displaystyle {\partial g_1(\bm y)\over\partial y_2} & \displaystyle {\partial g_1(\bm y)\over\partial y_3} & \displaystyle {\partial g_1(\bm y)\over\partial y_4} & \displaystyle {\partial g_1(\bm y)\over\partial y_5} & \displaystyle {\partial g_1(\bm y)\over\partial y_6}
	\end{pmatrix}\,
	\right\vert_{\bm y=\mathbb{E}(\bm Y)}.
\end{align*}
For simplicity of presentation, we do not present the partial derivatives of $g_j$, $j=1,2$, here.

Note that, in \eqref{id-1} and \eqref{id-2} it is extremely complicated to analytically calculate $g_j(\mathbb{E}(\bm Y))$, $j=1,2$, for general generators $T$, so a numerical analysis for this calculation is required. In the case that $g_1(\mathbb{E}(\bm Y))=\sigma$ and $g_2(\mathbb{E}(\bm Y))=\mu$, from \eqref{id-1} we obtain the strong consistency of the estimators $\widehat{\sigma}$ and $\widehat{\mu}$. The following result shows that for generators of type $T(x)=x^{-s}$, for some $s\ne 0$, the strong consistency property of the estimators $\widehat{\sigma}$ and $\widehat{\mu}$ is satisfied.

\begin{theorem}
	If $T(x)=x^{-s}$, for some $s\ne 0$, then
	$g_1(\mathbb{E}(\bm Y))=\sigma$ and $g_2(\mathbb{E}(\bm Y))=\mu$.
\end{theorem}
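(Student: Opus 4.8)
The plan is to use the stochastic representation \eqref{rep-st-1} to collapse the six-dimensional mean vector $\mathbb{E}(\boldsymbol{Y})$ onto a single one-dimensional law — that of $W:=T(Y)$ — and then to verify both identities by substituting the resulting moments into the explicit formulas \eqref{def-g1}--\eqref{def-g2}. I would first specialize the coordinates \eqref{Ystar} to $T(x)=x^{-s}$. From $T'(x)=-sx^{-s-1}$ and $T''(x)=s(s+1)x^{-s-2}$ one gets $T''(Y)/T'(Y)=-(s+1)/Y$, $T'(Y)/T(Y)=-s/Y$ and $T'(Y)Y=-sW$, where $W:=T(Y)=Y^{-s}$ and hence $\log Y=-\tfrac1s\log W$. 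Substituting, the coordinates become $Y_1^*=-\tfrac1s\log W$, $Y_2^*=\tfrac{s+1}{s}\log W$, $Y_3^*=\tfrac{W}{1+W}\log W$, $Y_4^*=\log W$, $Y_5^*=W\log W$ and $Y_6^*=W$. The decisive structural observation is that the exponent $s$ cancels out of every combination feeding into $g_1,g_2$: in particular the $\log W$ contributions in $1+\sum_{i=1}^{3}Y_i^*-Y_4^*$ cancel, leaving $1+\tfrac{W}{1+W}\log W$. Everything thus reduces to moments of $W$.

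Second I would determine the law of $W$. Taking $p=1$ and $\delta_{ab}=1$ in \eqref{rep-st-1}, $W=T(X)$ equals $Z_1$ on $\{B=0\}$ and $Z_2$ on $\{B=1\}$, so $W$ is the mixture of $\mathrm{Gamma}(\mu,1/(\mu\sigma))$ and $\mathrm{Gamma}(\mu+1,1/(\mu\sigma))$ with weights $\sigma/(\sigma+1)$ and $1/(\sigma+1)$; equivalently $W$ has density $f_W(w)=\tfrac{(\mu\sigma)^\mu\sigma}{(\sigma+1)\Gamma(\mu)}\,w^{\mu-1}(1+w)e^{-\mu\sigma w}$, $w>0$. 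Using $\int_0^\infty w^{a-1}e^{-bw}\,dw=\Gamma(a)b^{-a}$ together with its $a$-derivative $\int_0^\infty w^{a-1}(\log w)e^{-bw}\,dw=\Gamma(a)b^{-a}[\psi^{(0)}(a)-\log b]$, I would evaluate $\mathbb{E}(W)$, $\mathbb{E}(\log W)$, $\mathbb{E}(W\log W)$ and $\mathbb{E}\big(\tfrac{W}{1+W}\log W\big)$. Writing $c:=\psi^{(0)}(\mu+1)-\log(\mu\sigma)$, the expected output is the set of compact identities $A:=1+\sum_{i=1}^{3}\mathbb{E}(Y_i^*)-\mathbb{E}(Y_4^*)=1+\tfrac{c}{\sigma+1}$, $\sigma\mathbb{E}(W)-1=\tfrac{1}{\mu(\sigma+1)}$, $\mathbb{E}(W\log W)=c\,\mathbb{E}(W)+\tfrac{1}{\mu\sigma(\sigma+1)}$ and, most importantly, $\sigma\,\mathbb{E}(W\log W)-\mathbb{E}(\log W)=A/\mu$.

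Given these, $g_2(\mathbb{E}(\boldsymbol{Y}))=\mu$ is immediate from \eqref{def-g2}: its numerator is $A$, and once $g_1(\mathbb{E}(\boldsymbol{Y}))=\sigma$ is known its denominator is exactly $\sigma\,\mathbb{E}(Y_5^*)-\mathbb{E}(Y_4^*)=A/\mu$. For $g_1$, I would recall that \eqref{mle-sigma-1}, and hence \eqref{def-g1}, is the positive-root branch of the quadratic $A\,\mathbb{E}(W)\,x^2-[A(1-\mathbb{E}(W))+\mathbb{E}(W\log W)]\,x+(\mathbb{E}(\log W)-A)=0$, and then check by direct substitution — using the identities above, in particular $\sigma\mathbb{E}(W)-1=\tfrac1{\mu(\sigma+1)}$ — that $x=\sigma$ annihilates it, so $\sigma$ is a root.

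The hard part is the root selection: showing that the branch chosen in \eqref{def-g1} actually returns $\sigma$ and not the companion root. Since $\sigma$ is a root the discriminant is a perfect square, so this reduces to a sign check that I would settle with two elementary bounds. The leading coefficient $A\,\mathbb{E}(W)$ is positive because $A=1+\mathbb{E}\big(\tfrac{W}{1+W}\log W\big)\ge 1-\tfrac1e>0$, via the pointwise inequality $\tfrac{w}{1+w}\log w\ge-\tfrac1e$ (trivial for $w\ge1$, and for $w<1$ from $\tfrac{w}{1+w}(-\log w)\le w(-\log w)\le\sup_{t>0}(-t\log t)=1/e$). The product of the two roots, $(\mathbb{E}(\log W)-A)/(A\,\mathbb{E}(W))$, is negative because $\mathbb{E}(\log W)-A=\tfrac{\sigma}{\sigma+1}[\psi^{(0)}(\mu)-\log(\mu\sigma)]-1<0$, which follows from the digamma bound $\psi^{(0)}(\mu)<\log\mu$ together with $\sup_{t>0}(-t\log t)=1/e$. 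Hence the two roots have opposite signs, the unique positive one is $\sigma$, and the positive-root branch of \eqref{def-g1} selects it, giving $g_1(\mathbb{E}(\boldsymbol{Y}))=\sigma$ and completing the proof.
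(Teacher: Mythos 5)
Your proof is correct and follows essentially the same route as the paper: both use the stochastic representation to reduce everything to the law of $W=T(Y)\stackrel{\mathscr{D}}{=}(1-B)Z_1+BZ_2$, compute the same six expectations $\mathbb{E}(Y_k^*)$ via gamma/log-gamma moment identities, and substitute them into $g_1$ and $g_2$. The only substantive difference is that you spell out two points the paper compresses into ``a simple observation shows'': the cancellation of $s$ from every combination entering $g_1,g_2$, and the root-selection argument (leading coefficient positive via $A\geq 1-1/e>0$, constant term $\mathbb{E}(\log W)-A<0$ via $\psi^{(0)}(\mu)<\log\mu$ and $\sup_{t>0}(-t\log t)=1/e$, hence the two roots have opposite signs and the positive branch of \eqref{def-g1} must return $\sigma$) --- a genuine and worthwhile tightening of the paper's final verification step, but not a different method.
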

\begin{proof}
Note that the stochastic representation in \eqref{rep-st-2} can be expressed as
\begin{align*}
	Y\stackrel{\mathscr{D}}{=}(1-B)Z_1^{-1/s}+BZ_2^{-1/s},
\end{align*}
where $B\sim {\rm Bernoulli}(1/(\sigma+1))$  is independent of $Z_1$ and $Z_2$, such that $Z_j\sim{\rm Gamma}(\mu+j-1,1/(\mu\sigma))$, $j=1,2$.
By using the identities 
$T'(x)=-s x^{-s-1}$ and
$T''(x)=s(s+1) x^{-s-2}$,
and the above stochastic representation,
the statistics $Y_k^*$, $k=1,\ldots,6$, given in \eqref{Ystar}, can be written as
\begin{align*}
&Y_1^*
\stackrel{\mathscr{D}}{=}
-{1\over s}[(1-B)\log(Z_1)+B\log(Z_2)],
\\[0,2cm]
&Y_2^*
\stackrel{\mathscr{D}}{=}
-(s+1)Y_1^*,
\\[0,2cm]
&Y_3^*
\stackrel{\mathscr{D}}{=}
{(1-B)Z_1\log(Z_1)+BZ_2\log(Z_2)\over 1+(1-B)Z_1+BZ_2},
\\[0,2cm]
&Y_4^*
\stackrel{\mathscr{D}}{=}
-sY_1^*,
\\[0,2cm]
&Y_5^*
\stackrel{\mathscr{D}}{=}
(1-B)Z_1\log(Z_1)+BZ_2\log(Z_2),
\\[0,2cm]
&Y_6^*
\stackrel{\mathscr{D}}{=}
(1-B)Z_1+BZ_2.
\end{align*}
Consequently, by employing the well-known identities $\mathbb{E}[\log(Z_j)]=\psi^{(0)}(\mu+j-1)-\log(\mu\sigma)$, $\mathbb{E}[Z_j\log(Z_j)]=[(\mu+j-1)/(\mu\sigma)][\psi^{(0)}(\mu+j)-\log(\mu\sigma)]$, $j=1,2$, and $\psi^{(0)}(z+1)=\psi^{(0)}(z)+1/z$, we get
\begin{align*}
&\mathbb{E}(Y_1^*)
=
-{1\over s}
\left[
\psi^{(0)}(\mu)
-\log(\mu\sigma)
+
{1\over \mu(\sigma+1)}
\right],
\\[0,2cm]
&\mathbb{E}(Y_2^*)
=
{s+1\over s}
\left[
\psi^{(0)}(\mu)
-\log(\mu\sigma)
+
{1\over \mu(\sigma+1)}
\right],
\\[0,2cm]
&\mathbb{E}(Y_3^*)
=
{1\over\sigma +1}
\left[
\psi^{(0)}(\mu)
-
\log(\mu\sigma)
+
{1\over\mu}
\right],
\\[0,2cm]
&\mathbb{E}(Y_4^*)
=
\psi^{(0)}(\mu)
-\log(\mu\sigma)
+
{1\over \mu(\sigma+1)},
\\[0,2cm]
&\mathbb{E}(Y_5^*)
=
{1\over\sigma+1}\left(1+{\mu+1\over \mu\sigma}\right) \left[\psi^{(0)}(\mu)-\log(\mu\sigma)
+{1\over\mu}\right]
+
{1\over\mu\sigma(\sigma+1)},
\\[0,2cm]
&\mathbb{E}(Y_6^*)
=
{1\over\sigma+1}\left(1+{\mu+1\over\mu\sigma}\right).
\end{align*}
Hence, by using the definition of $g_1$ and $g_2$, given in \eqref{def-g1} and \eqref{def-g2}, respectively, a simple observation shows that
\begin{align*}
g_1(\mathbb{E}(\bm Y))
=
g_1((\mathbb{E}(Y_1^*),\mathbb{E}(Y_2^*),\mathbb{E}(Y_3^*),\mathbb{E}(Y_4^*),\mathbb{E}(Y_5^*),\mathbb{E}(Y_6^*))^\top)
=\sigma
\end{align*}
and
\begin{align*}
g_2(\mathbb{E}(\bm Y))
=
g_2((\mathbb{E}(Y_1^*),\mathbb{E}(Y_2^*),\mathbb{E}(Y_3^*),\mathbb{E}(Y_4^*),\mathbb{E}(Y_5^*),\mathbb{E}(Y_6^*))^\top)
=\mu.
\end{align*}
This completes the proof.
\end{proof}

\section{Closed-form estimators of Table \ref{table:1-0}}\label{Sec:5}

In this section, we provide closed-form estimators for the paramaters that index the distributions of Table \ref{table:1-0}. We use the  notation $\widehat{\sigma}$ stated in \eqref{mle-sigma-1}, which for ease of reading we make available again in this section:
\begin{align}\label{mle-sigma-1-4}
\widehat{\sigma}
=
%
{\color{black}
	\dfrac
	{\displaystyle
		\overline{Z}(1-\overline{Y}_6)+\overline{Y}_5+\sqrt{[\overline{Z}(1-\overline{Y}_6)+\overline{Y}_5]^2+4\overline{Z}\,\overline{Y}_6(\overline{Z}-\overline{Y}_4)}}
	{2 \overline{Z}\, \overline{Y}_6},
}
\end{align}
where $\overline{Z},\overline{Y}_4,\overline{Y}_5$ and $\overline{Y}_6$ were defined in Subsection \ref{Case equal}.

\paragraph{Weighted Lindley distribution.}
	By considering the parameters $\mu=\phi$, $\sigma=\lambda/\phi$ and generator $T(x)=x$ of the weighted Lindley distribution, given in Table \ref{table:1-0}, from \eqref{mle-sigma-1} and \eqref{estimator-mu-2} the closed-form  estimators for  $\phi$ and $\lambda$ are given by
	\begin{align}\label{estimator-mu-2-2}
	\widehat{\phi}
	=
	{\color{black}
		\dfrac{\overline{Z}}{\widehat{\sigma}\overline{Y}_5-\overline{Y}_4}
	}
	\end{align}
	and
	\begin{align}\label{mle-sigma-1-1}
	\widehat{\lambda}
	=
	{\color{black}
		\dfrac{\widehat{\sigma}\overline{Z}}{\widehat{\sigma}\overline{Y}_5-\overline{Y}_4}
	},
	\end{align}
%
	respectively. Here, $\widehat{\sigma}$ is as given in \eqref{mle-sigma-1-4}, with
	{\color{black}
	\begin{align*}
	&\overline{Z}
	=
	1
	+
	{1\over n}
	\sum_{i=1}^{n}
	{Y_i\log(Y_i)\over 1+Y_i},
	\\[0,2cm]
	&\overline{Y}_4
	=
	{1\over n}\sum_{i=1}^{n} \log(Y_i),
	\\[0,2cm]
	&\overline{Y}_5
	=
	{1\over n} \sum_{i=1}^{n} Y_i \log(Y_i),
		\\[0,2cm]
	&\overline{Y}_6
	=
	{1\over n} \sum_{i=1}^{n} Y_i.
	\end{align*}
}

	We emphasize that estimators $\widehat{\lambda}$ and $\widehat{\phi}$ for the weighted Lindley distribution have appeared in reference \cite{Kim2021}.

\paragraph{Weighted inverse Lindley distribution.}
	By considering the parameters  $\mu=\phi$, $\sigma=\lambda/\phi$ and generator $T(x)=1/x$ of the weighted inverse Lindley distribution, given in Table \ref{table:1-0}, from \eqref{mle-sigma-1} and \eqref{estimator-mu-2} the closed-form  estimators for $\lambda$ and $\phi$ have the same form of the estimators given in \eqref{mle-sigma-1-1} and \eqref{estimator-mu-2-2}, where
		\begin{align*}
	&\overline{Z}
	=
	1
	-
	{1\over n}
	\sum_{i=1}^{n}
	{Y_i^{-1}\log(Y_i)\over 1+Y_i^{-1}},
	\\[0,2cm]
	&\overline{Y}_4
	=
	-{1\over n}\sum_{i=1}^{n} \log(Y_i),
	\\[0,2cm]
	&\overline{Y}_5
	=
	-
	{1\over n} \sum_{i=1}^{n} Y_i^{-1} \log(Y_i),
	\\[0,2cm]
	&\overline{Y}_6
	=
	{1\over n} \sum_{i=1}^{n} Y_i^{-1}.
	\end{align*}

\paragraph{New weighted exponentiated Lindley distribution.}
	By considering the parameters  $\mu=\phi$, $\sigma=\lambda/\phi$ and generator $T(x)=\log(x+1)$ of new weighted exponentiated Lindley distribution, given  in Table \ref{table:1-0}, from \eqref{mle-sigma-1} and \eqref{estimator-mu-2} the closed-form  estimators for $\lambda$ and $\phi$ have the same form of the estimators given in  \eqref{mle-sigma-1-1} and \eqref{estimator-mu-2-2}, where
		\begin{align*}
	&\overline{Z}
	=
	1
	+
	{1\over n}
	\sum_{i=1}^{n}
	\log(Y_i)
	-
	{1\over n}
	\sum_{i=1}^{n}
	{Y_i \log(Y_i)\over Y_i+1}
	+
	{1\over n}
	\sum_{i=1}^{n}
	{Y_i\log(Y_i)\over (Y_i+1)(1+\log(Y_i+1))}
	-
	\overline{Y}_4,
	\\[0,2cm]
	&\overline{Y}_4
	=
	{1\over n}\sum_{i=1}^{n} {Y_i \log(Y_i)\over (Y_i+1)\log(Y_i+1)},
	\\[0,2cm]
	&\overline{Y}_5
	=
	{1\over n} \sum_{i=1}^{n} {Y_i \log(Y_i)\over Y_i+1},
	\\[0,2cm]
	&\overline{Y}_6
	=
	{1\over n} \sum_{i=1}^{n} \log(Y_i+1).
	\end{align*}

\paragraph{New weighted log-Lindley distribution.}
	By considering the parameters  $\mu=\phi$, $\sigma=\lambda/\phi$ and generator $T(x)=\exp(x)-1$ of new weighted log-Lindley distribution, given  in Table \ref{table:1-0}, from \eqref{mle-sigma-1} and \eqref{estimator-mu-2} the closed-form  estimators for $\lambda$ and $\phi$ have the same form of the estimators given in  \eqref{mle-sigma-1-1} and \eqref{estimator-mu-2-2}, where
		\begin{align*}
	&\overline{Z}
	=
	1
	+
	{1\over n}
	\sum_{i=1}^{n}
	\log(Y_i)
	+
	{2\over n}
	\sum_{i=1}^{n}
	Y_i \log(Y_i)
	-
		\overline{Y}_4,
	\\[0,2cm]
	&\overline{Y}_4
	=
	{1\over n}\sum_{i=1}^{n} {\exp(Y_i)Y_i \log(Y_i)\over \exp(Y_i) -1},
	\\[0,2cm]
	&\overline{Y}_5
	=
	{1\over n} \sum_{i=1}^{n} \exp(Y_i) Y_i \log(Y_i),
	\\[0,2cm]
	&\overline{Y}_6
	=
	{1\over n} \sum_{i=1}^{n} \exp(Y_i)-1.
	\end{align*}

\paragraph{Weighted Nakagami distribution.}
By considering the parameters  $\mu=m$, $\sigma=1/\Omega$ and generator $T(x)=x^2$ of new weighted Nakagami distribution, given  in Table \ref{table:1-0}, from \eqref{mle-sigma-1} and \eqref{estimator-mu-2} the closed-form  estimators for $\Omega$ and $m$ are given by
\begin{align}\label{mle-sigma-1-1-1}
\widehat{\Omega}
=
{\color{black}
	\dfrac
	{2 \overline{Z}\, \overline{Y}_6}
	{\displaystyle
		\overline{Z}(1-\overline{Y}_6)+\overline{Y}_5+\sqrt{[\overline{Z}(1-\overline{Y}_6)+\overline{Y}_5]^2+4\overline{Z}\,\overline{Y}_6(\overline{Z}-\overline{Y}_4)}},
}
\end{align}
%
and
%
\begin{align}\label{estimator-mu-2-2-2}
\widehat{m}
=
%
{\color{black}
	\dfrac{\widehat{\Omega}\,\overline{Z}}{\overline{Y}_5-\widehat{\Omega}\,\overline{Y}_4},
}
\end{align}
respectively, where
\begin{align*}
&\overline{Z}
=
1
+
{2\over n}
\sum_{i=1}^{n}
{Y_i^2 \log(Y_i)\over 1+Y_i^2},
\\[0,2cm]
&\overline{Y}_4
=
{2\over n}\sum_{i=1}^{n} \log(Y_i),
\\[0,2cm]
&\overline{Y}_5
=
{2\over n} \sum_{i=1}^{n} Y_i^2 \log(Y_i),
\\[0,2cm]
&
\overline{Y}_6
=
{1\over n} \sum_{i=1}^{n} Y_i^2.
\end{align*}

\paragraph{Weighted inverse Nakagami distribution.}
	By considering the parameters  $\mu=m$, $\sigma=1/\Omega$ and generator $T(x)=1/x^2$ of weighted inverse Nakagami distribution, given  in Table \ref{table:1-0}, from \eqref{mle-sigma-1} and \eqref{estimator-mu-2} the closed-form  estimators for $\Omega$ and $m$ have the same form of the estimators given in  \eqref{mle-sigma-1-1-1} and \eqref{estimator-mu-2-2-2}, where
	\begin{align*}
	&\overline{Z}
	=
	1
	-
	{2\over n}
	\sum_{i=1}^{n}
	{Y_i^{-2}\log(Y_i)\over 1+Y_i^{-2}},
	\\[0,2cm]
	&\overline{Y}_4
	=
	-{2\over n}\sum_{i=1}^{n} \log(Y_i),
	\\[0,2cm]
	&\overline{Y}_5
	=
	-{2\over n} \sum_{i=1}^{n} Y_i^{-2}\, \log(Y_i),
	\\[0,2cm]
	&
	\overline{Y}_6
	=
	{1\over n} \sum_{i=1}^{n} Y_i^{-2}.
	\end{align*}

\paragraph{New weighted exponentiated Nakagami distribution.}
	By considering the parameters  $\mu=m$, $\sigma=1/\Omega$ and generator $T(x)=\log(x^2+1)$ of new weighted exponentiated Nakagami distribution, given  in Table \ref{table:1-0}, from \eqref{mle-sigma-1} and \eqref{estimator-mu-2} the closed-form  estimators for $\Omega$ and $m$ have the same form of the estimators given in  \eqref{mle-sigma-1-1-1} and \eqref{estimator-mu-2-2-2}, where
	\begin{align*}
	&\overline{Z}
	=
	1
	+
	{1\over n}
	\sum_{i=1}^{n}
	\log(Y_i)
	-
	{1\over n}
	\sum_{i=1}^{n}
	{(Y_i^2-1)\log(Y_i)\over Y_i^2+1}
	+
	{2\over n}
	\sum_{i=1}^{n}
	{Y_i^2\log(Y_i)\over (Y_i^2+1)(1+\log(Y_i^2+1))}
	-
		\overline{Y}_4,
	\\[0,2cm]
	&\overline{Y}_4
	=
	{2\over n}\sum_{i=1}^{n} {Y_i^2\log(Y_i)\over (Y_i^2+1)\log(Y_i^2+1)},
	\\[0,2cm]
	&\overline{Y}_5
	=
	{2\over n} \sum_{i=1}^{n} {Y_i^2\log(Y_i)\over Y_i^2+1},
	\\[0,2cm]
	&
	\overline{Y}_6
	=
	{1\over n} \sum_{i=1}^{n} \log(Y_i^2+1).
	\end{align*}

\paragraph{New weighted log-Nakagami distribution.}
	By considering the parameters  $\mu=m$, $\sigma=1/\Omega$ and generator $T(x)=\exp(x^2)-1$ of new weighted log-Nakagami distribution, given  in Table \ref{table:1-0}, from \eqref{mle-sigma-1} and \eqref{estimator-mu-2} the closed-form  estimators for $\Omega$ and $m$ have the same form of the estimators given in  \eqref{mle-sigma-1-1-1} and \eqref{estimator-mu-2-2-2}, where
\begin{align*}
&\overline{Z}
=
1
+
{2\over n}
\sum_{i=1}^{n}
\log(Y_i)
+
{4\over n}
\sum_{i=1}^{n}
{Y_i^2\log(Y_i)}
-
\overline{Y}_4,
\\[0,2cm]
&\overline{Y}_4
=
{2\over n}\sum_{i=1}^{n} {\exp(Y_i^2) Y_i^2\log(Y_i)\over \exp(Y_i^2)-1},
\\[0,2cm]
&\overline{Y}_5
=
{2\over n} \sum_{i=1}^{n} \exp(Y_i^2) Y_i^2 \log(Y_i),
\\[0,2cm]
&
\overline{Y}_6
=
{1\over n} \sum_{i=1}^{n} \exp(Y_i^2)-1.
\end{align*}

\section{Simulation study}\label{Sec:6}

In this section, we carry out a Monte Carlo simulation study for evaluating the performance of the proposed estimators. For illustrative purposes, we only present the results for the weighted Lindley distribution  (the results for other distributions are
similar and then ommited here). As the estimators in \ref{Sec:5} are biased \citep{RLR2016}, we then propose a bootstrap bias-reduced version of these proposed ML estimators as
\begin{equation*}\label{asy:08}
\widehat{\theta}^{*} = 2\widehat{\theta} - \frac{1}{B}\sum_{b=1}^{B}\widehat{\theta}^{(b)},
\end{equation*}
where $\widehat{\theta}\in\{\widehat{\Omega},
\widehat{m},
\widehat{\beta},
\widehat{\alpha},
\widehat{\tau}^{\,2},
\widehat{\nu}\}$,
$B$ is the number of bootstrap replications, and $\widehat{\theta}^{(b)}$ is the $b$-th bootstrap replicate from the $b$-th bootstrap sample. We study the performance of the proposed bootstrap bias-reduced estimators by computing the relative bias (RB) and root mean square error (RMSE), as
\begin{eqnarray*}
 \widehat{\textrm{RB}}(\widehat{\theta}^{*}) =  \left|\frac{\frac{1}{N} \sum_{i = 1}^{N} \widehat{\theta}^{*(i)} - \theta}{\theta}\right| ,  \quad
\widehat{\mathrm{RMSE}}(\widehat{\theta}^{*}) = {\sqrt{\frac{1}{N} \sum_{i = 1}^{N} (\widehat{\theta}^{*(i)} - \theta)^2}},
\end{eqnarray*}
where $\theta$ and $\widehat{\theta}^{*(i)}$ are the true parameter value and its $i$-th bootstrap bias-reduced estimate, and $N$ is the number of Monte Carlo replications.

The simulation scenario considers the following setting: $n \in \{20,50,100,200,400,1000\}$; $\phi\in \{0.5,1,3,5,9\}$, and $\lambda=1$, with $N=1,000$ and $B=200$ Monte Carlo and bootstrap replications, respectively, for each sample size. The
\texttt{R} software was used to do all numerical calculations; see \url{http://cran.r-project.org}.

The ML estimation results for the considered weighted Lindley distribution are presented in Figures \ref{fig_dagum_MC1} and \ref{fig_dagum_MC2}. We observe that both RBs and RMSEs approach zero as $n$ grows. Figure~\ref{fig_dagum_MC3} presents the execution times of the Monte Carlo simulations for the sample sizes considered. From Figure~\ref{fig_dagum_MC3}, we observe that on average, the 1,000 replicates do not take more than 24 seconds to produce the results. In short, the proposed bootstrap bias-reduced ML estimators may be good options for estimating the parameters associated with the distributions studied in this manuscript.

\begin{figure}[H]
\vspace{-0.25cm}
\centering
{\includegraphics[height=5.5cm,width=5.5cm]{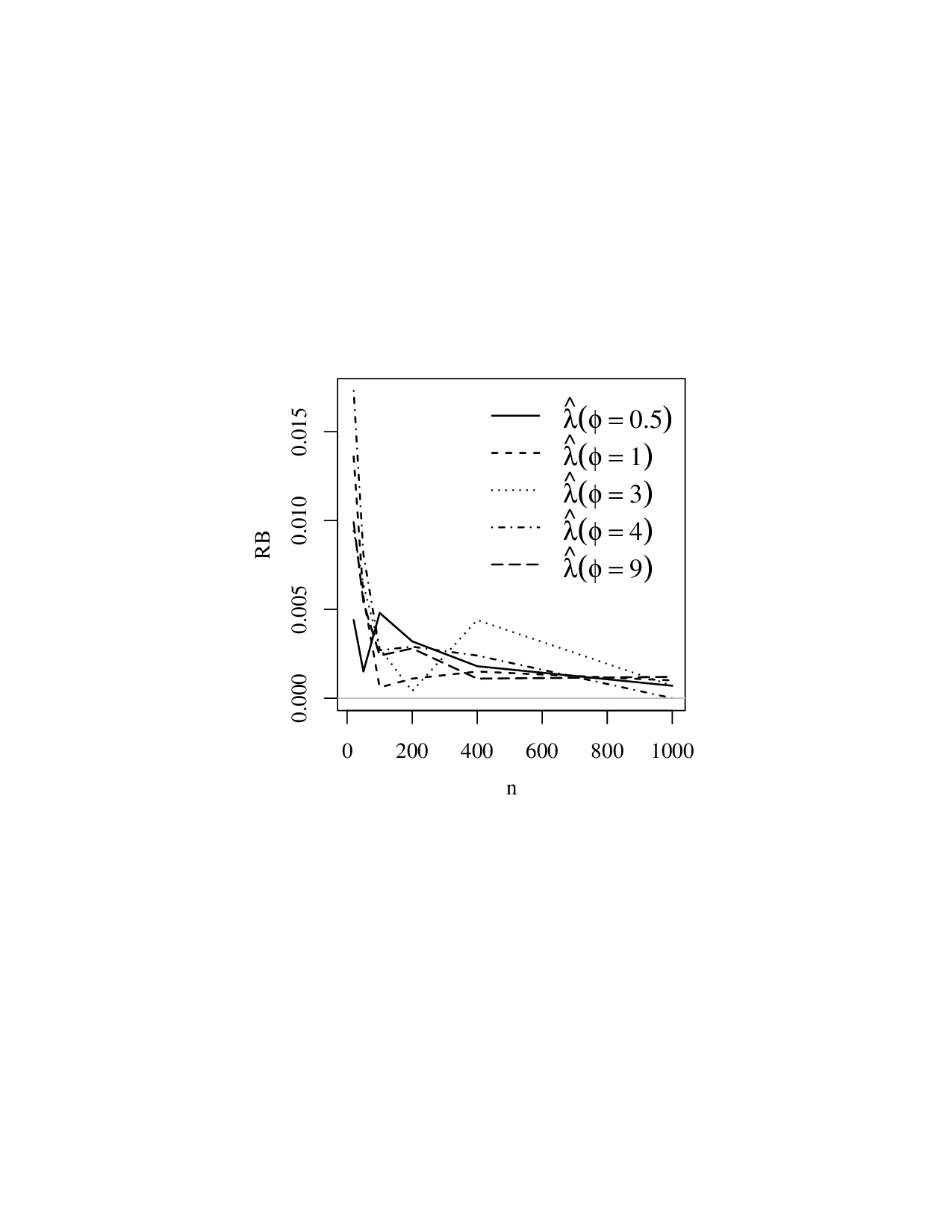}}\hspace{-0.25cm}
{\includegraphics[height=5.5cm,width=5.5cm]{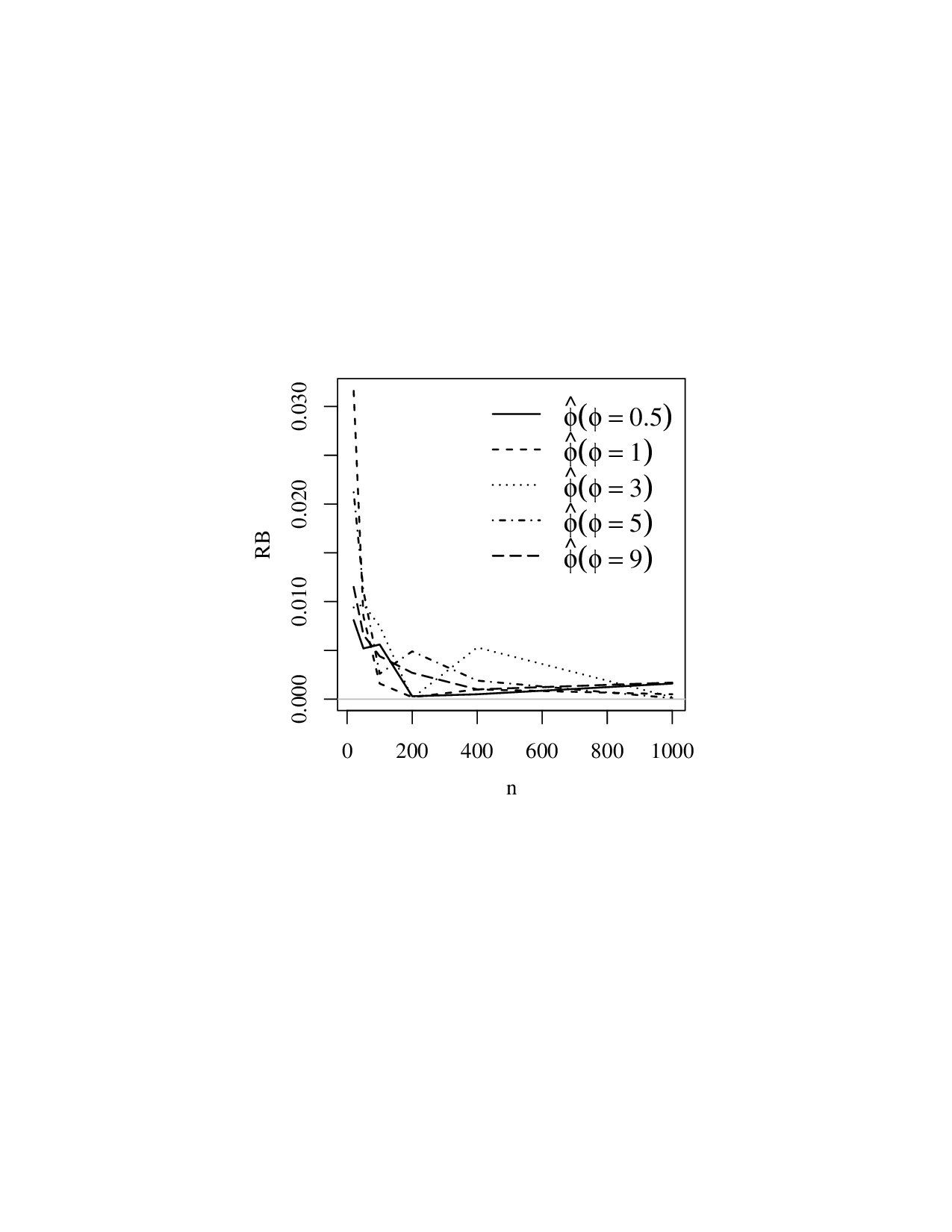}}\hspace{-0.25cm}
\vspace{-0.2cm}
\caption{Empirical RB of the ML estimators for the weighted Lindley distribution.}
\label{fig_dagum_MC1}
\end{figure}

\begin{figure}[H]
\vspace{-0.25cm}
\centering
{\includegraphics[height=5.5cm,width=5.5cm]{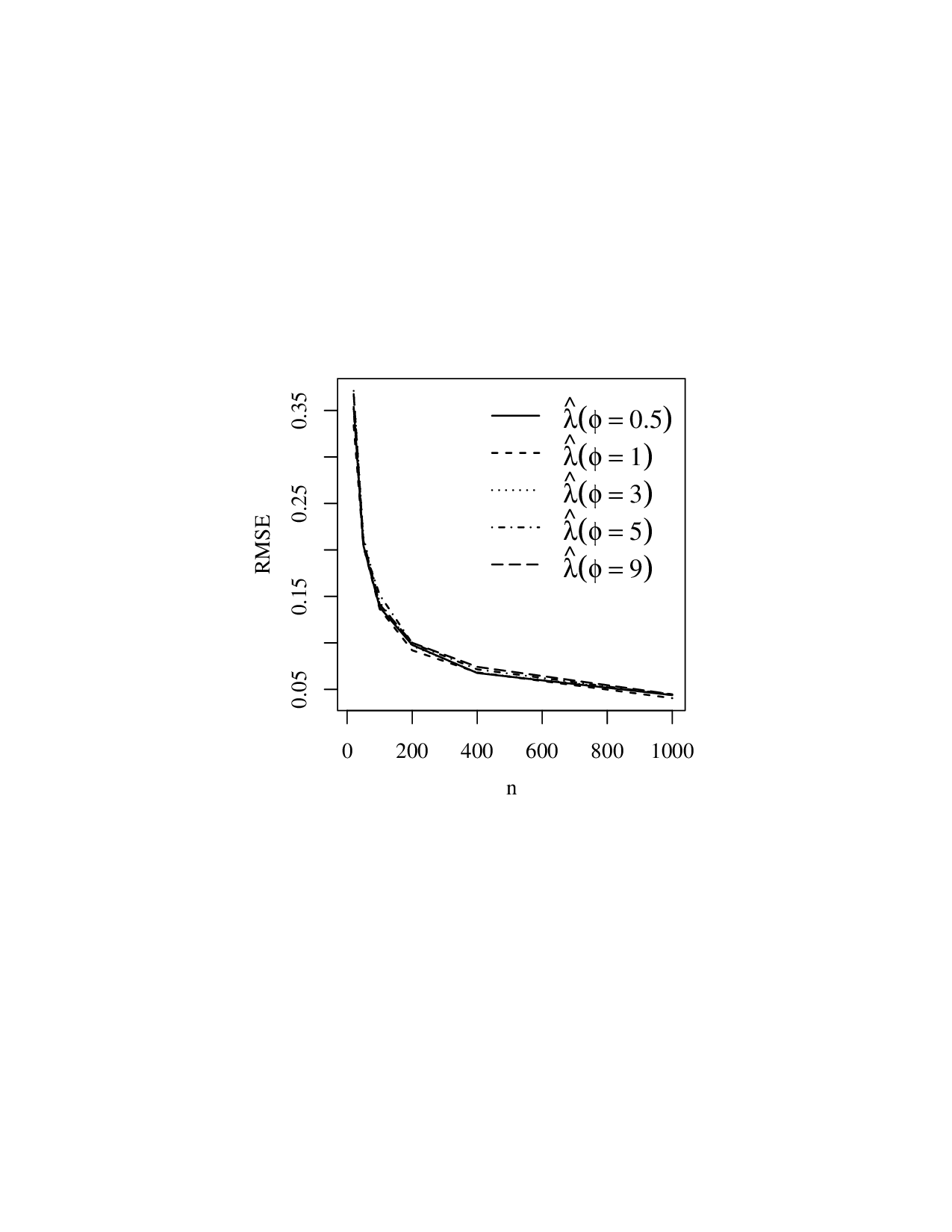}}\hspace{-0.25cm}
{\includegraphics[height=5.5cm,width=5.5cm]{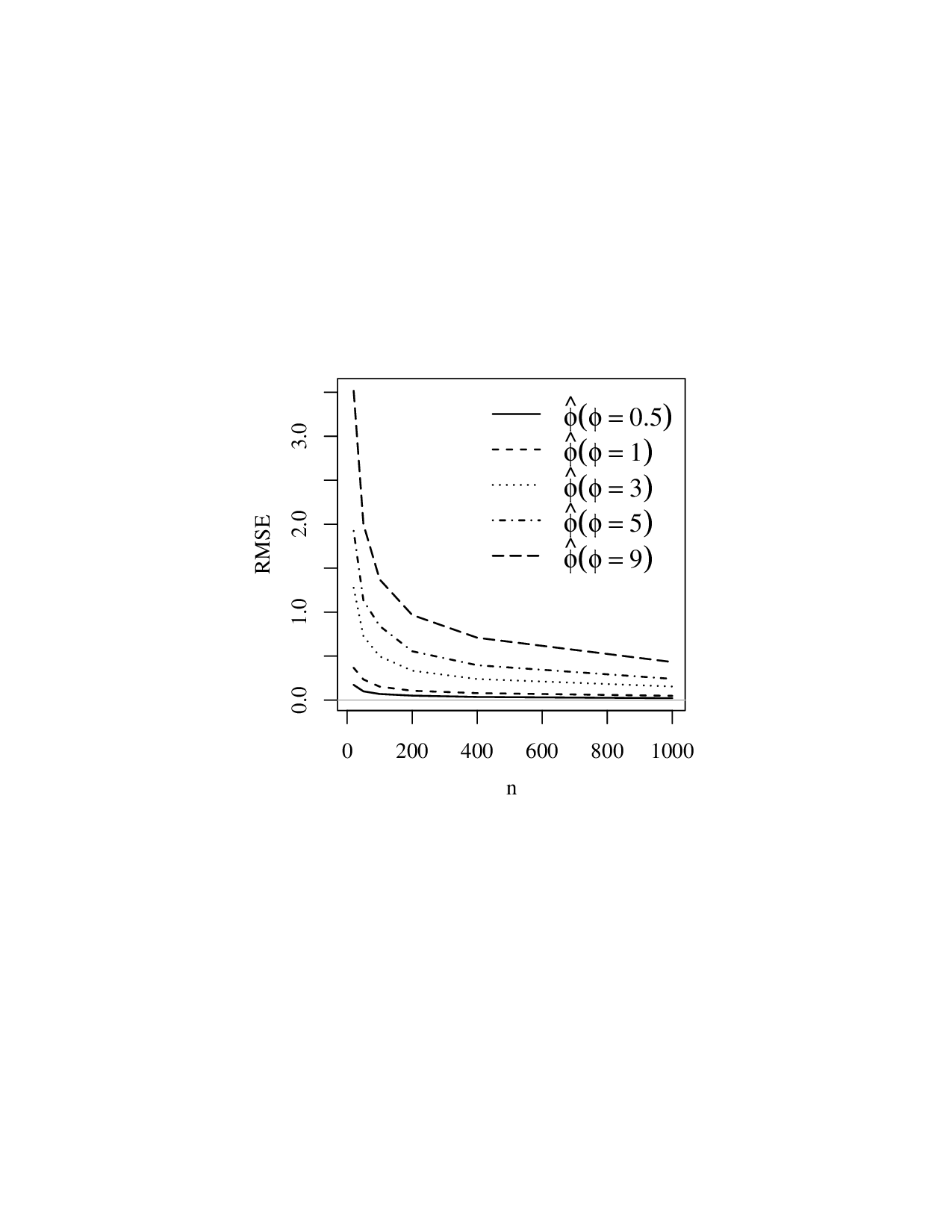}}\hspace{-0.25cm}

%
\vspace{-0.2cm}
\caption{Empirical RMSE of the ML estimators for the weighted Lindley distribution.}
\label{fig_dagum_MC2}
\end{figure}

\begin{figure}[H]
\vspace{-0.25cm}
\centering
{\includegraphics[height=5.5cm,width=5.5cm]{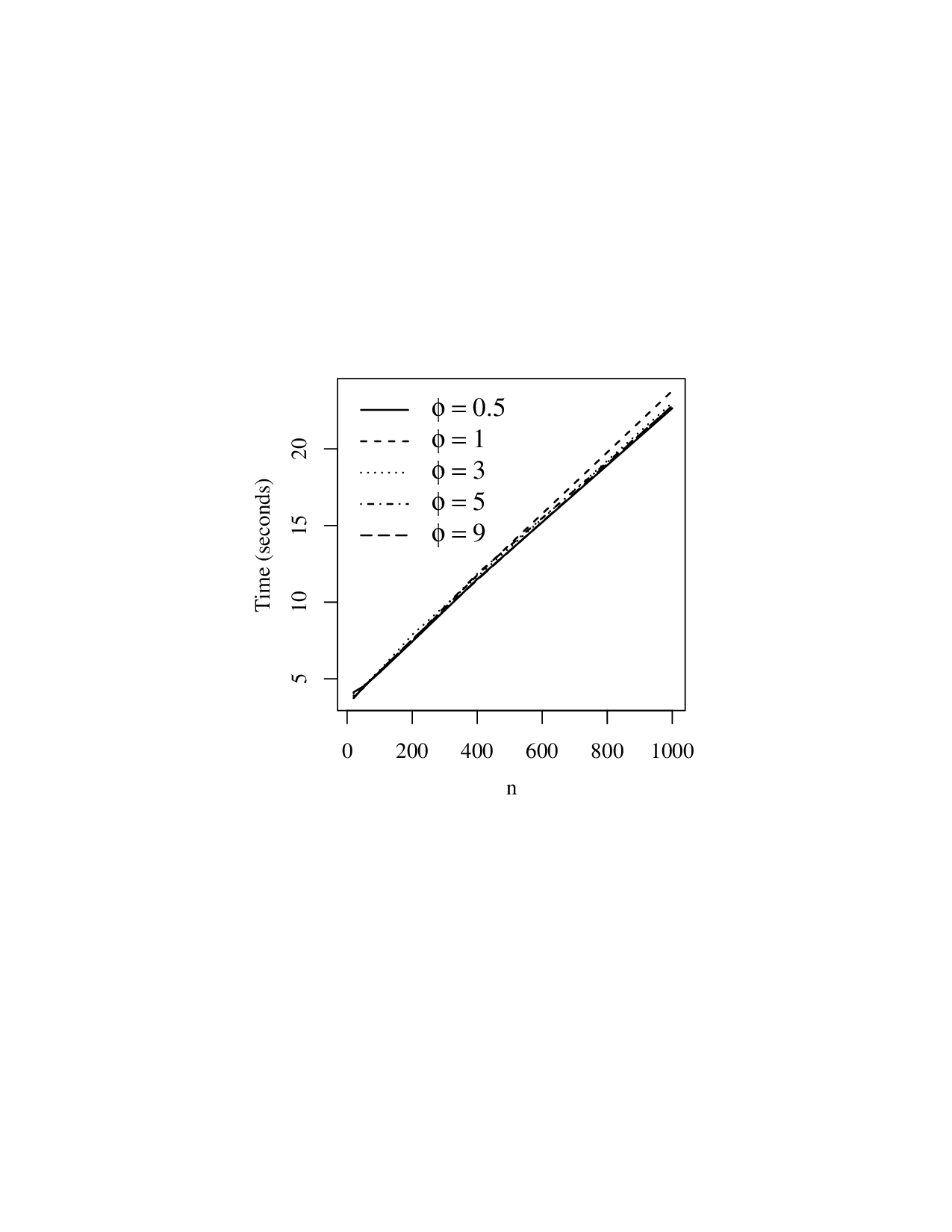}}\hspace{-0.25cm}
%
\vspace{-0.2cm}
\caption{Empirical time of the ML estimators for the weighted Lindley distribution.}
\label{fig_dagum_MC3}
\end{figure}

	\paragraph*{Acknowledgements}
 This study was financed in part by the
Coordenação de Aperfeiçoamento de Pessoal de Nível Superior - Brasil (CAPES) - Finance Code 001.
	
	\paragraph*{Disclosure statement}
	There are no conflicts of interest to disclose.

\end{document}